\documentclass[preprint]{elsarticle}
\usepackage[utf8]{inputenc}

\usepackage{amsmath}
\usepackage{amssymb}
\usepackage{amsthm}
\usepackage{stmaryrd}

\usepackage{xspace}

\usepackage{tabularx}
\usepackage{booktabs}

\newtheorem{definition}{Definition}
\newtheorem{theorem}{Theorem}
\newtheorem{example}{Example}
\newtheorem{corollary}{Corollary}

\newcommand{\expspace}{\ensuremath{\mathbf{EXPSPACE}}\xspace}
\newcommand{\pspace}{\ensuremath{\mathbf{PSPACE}}}
\newcommand{\np}{\ensuremath{\mathbf{NP}}}
\newcommand{\nexptime}{\ensuremath{\mathbf{NEXPTIME}}}
\newcommand{\nnexptime}{\ensuremath{\mathbf{2{-}NEXPTIME}}}

\newcommand{\atime}[2]{\ensuremath{\mathbf{ATIME}(#1, #2)}}
\newcommand{\aexpp}{\ensuremath{\mathbf{AEXP}(\mathrm{poly})}\xspace}
\newcommand{\atm}{\ensuremath{\textsc{atm}}\xspace}
\newcommand{\bvv}{\ensuremath{\mathsf{BV2}}\xspace}
\newcommand{\bv}{\ensuremath{\mathsf{BV}}\xspace}
\newcommand{\ar}[1]{\ensuremath{\mathrm{ar}(#1)}\xspace}
\newcommand{\extract}{\ensuremath{\mathit{extract}}}

\newcommand{\Oh}{\ensuremath{\mathcal{O}}}
\newcommand{\N}{\ensuremath{\mathbb{N}}}
\newcommand{\F}{\ensuremath{\mathcal{F}}}
\newcommand{\I}{\ensuremath{\mathcal{I}}}
\newcommand{\SO}{\ensuremath{\mathsf{SO}_2}}
\newcommand{\eval}[1]{\ensuremath{\llbracket #1 \rrbracket}}

\newcommand{\smt}{\textsc{smt}\xspace}

\bibliographystyle{unsrt}

\usepackage[dvipsnames]{xcolor}
\usepackage{hyperref}

\title{On the Complexity of the Quantified \\ Bit-Vector Arithmetic with
  Binary Encoding}

\author[fi]{M.~Jonáš\corref{cor1}}
\ead{martin.jonas@mail.muni.cz}

\author[fi]{J.~Strejček}
\ead{strejcek@mail.muni.cz}

\cortext[cor1]{Corresponding author}
\address[fi]{Faculty of Informatics, Masaryk University \\ Botanická
  68a, 602\,00, Brno, Czech Republic}

\begin{document}

\begin{abstract}
  We study the precise computational complexity of deciding
  satisfiability of first-order quantified formulas over the theory of
  fixed-size bit-vectors with binary-encoded bit-widths and constants.
  This problem is known to be in \expspace and to be \nexptime-hard.
  We show that this problem is complete for the complexity class
  \aexpp~-- the class of problems decidable by an alternating Turing
  machine using exponential time, but only a polynomial number of
  alternations between existential and universal states.
\end{abstract}

\begin{keyword}
  computational complexity, satisfiability modulo theories, fixed-size
  bit-vectors
\end{keyword}

\maketitle

\section{Introduction}

The first-order theory of fixed-size bit-vectors is widely used for
describing properties of software and hardware. Although most current
applications use only the quantifier-free fragment of this logic, there
are several use cases that benefit from using bit-vector formulas
containing
quantifiers~\cite{GSV09,SGF10,CKRW13,KLW13,MBLB16}. Consequently,
computational complexity of quantified bit-vector logic has been
investigated in recent years. It has been shown that deciding
satisfiability of quantified bit-vector formulas is $\pspace$-complete
and it becomes $\nexptime$-complete when uninterpreted functions are
allowed in addition to quantifiers~\cite{WHM13}.

However, these results suppose that all scalars in the formula are
represented in the unary encoding, which is not the case in practice,
because in most of real-world applications, bit-widths and constants
are encoded logarithmically. For example, the format
\textsc{smt-lib}~\cite{BFT15}, which is an input format for most of
the state-of-the-art \smt solvers, represents all scalar values as
decimal numbers. Such representation can be exponentially more
succinct than the representation using unary-encoded scalars. The
satisfiability problem for bit-vector formulas with binary-encoded
scalars has been recently investigated by Kovásznai et
al.~\cite{KFB16}. They have shown that the satisfiability of
quantified bit-vector formulas with binary-encoded scalars and with
uninterpreted functions is $\nnexptime$-complete. The situation for
the same problem without uninterpreted functions is not so clear:
deciding satisfiability of quantified bit-vector formulas with binary
encoded scalars and without uninterpreted functions (we denote this
problem as $\bvv$ satisfiability) is known to be in $\expspace$ and to
be $\nexptime$-hard, but its precise complexity has remained
unknown~\cite{KFB16}.

In this paper, we solve this open problem by identifying the
complexity class for which $\bvv$ satisfiability is complete. We use
the notion of an alternating Turing machine introduced by Chandra et
al.~\cite{CKS81} and show that the $\bvv$ satisfiability problem is
complete for the class $\aexpp$ of problems solvable by an alternating
Turing machine using exponential time, but only a polynomial number of
alternations.

\section{Quantified Bit-Vector Formulas}

The \emph{theory of fixed-size bit-vectors} (\emph{\bv} or
\emph{bit-vector theory} for short) is a many-sorted first-order
theory with infinitely many sorts corresponding to bit-vectors of
various lengths. Each bit-vector variable has an explicitly assigned
sort, e.g.~$x^{[3]}$ is a bit-vector variable of bit-width $3$. The
\bv theory uses only three predicates, namely equality ($=$), unsigned
inequality of binary-encoded non-negative integers ($\leq_u$), and
signed inequality of integers in $2$'s complement representation
($\leq_s$). The signature also contains constants $c^{[n]}$ for each
$n \geq 1$ and $0 \leq c \leq 2^n - 1$, and various interpreted
functions, namely addition ($+$), multiplication ($*$), unsigned
division ($\div$), bitwise negation ($\sim$), bitwise and ($\&$),
bitwise or ($|$), bitwise exclusive or ($\oplus$), left-shift
($\ll$), right-shift ($\gg$), concatenation ($\cdot$), and extraction
of a subword starting at the position $i$ and ending at the position
$j$ ($\extract(\_,i,j)$). Although various sources define the full
bit-vector theory with different sets of functions, all such
definitions can be polynomially reduced to each other~\cite{KFB16}.
All numbers occurring in the formula, i.e.~values of constants,
bit-widths and bounds $i,j$ of extraction, are called \emph{scalars}.

\begin{table}[!btp]
  \begin{center}
    \setlength{\tabcolsep}{0.6em}
    \begin{tabular}{l l l}
      \toprule
      & Expression & Size \\
      \midrule
      Constant & $|c^{[n]}|$ & $L(c) + L(n)$ \\
      Variable & $|x^{[n]}|$ & $1 + L(n)$ \\
      Operation & $|o(t_1, \ldots, t_k, i_1, \ldots, i_p)|$ & $1 + \sum_{1 \leq i \leq k} |t_i| + \sum_{1 \leq j \leq p} L(i_j)$ \\
      Quantifier & $|Qx^{[n]} \varphi|$ & $|x^{[n]}| + |\varphi|$ \\
      \bottomrule
    \end{tabular}
  \end{center}
  \caption{Recursive definition of the formula size. Operations
    include logical connectives, function symbols, and predicate
    symbols. Each $t_i$ denotes a subterm or a subformula, each $i_j$
    denotes a scalar argument of an operation, and
    $Q \in \{ \exists, \forall \}$~\cite{KFB16}.}
  \label{tbl:formulaSize}
\end{table}

There are more ways to encode scalars occurring in the bit-vector
formula: in the \emph{unary encoding} or in a \emph{logarithmic
  encoding}. In this paper, we focus only on formulas using the
\emph{binary encoding}. This covers all logarithmic encodings, since
all of them are polynomially reducible to each other. In the binary
encoding, $L(n)$ bits are needed to express the number $n$, where
$L(0) = 1$ and $L(n) = \lfloor \log_2 n \rfloor + 1$ for all $n > 0$.
The entire formula is encoded in the following way: each constant
$c^{[n]}$ has both its value $c$ and bit-width $n$ encoded in binary,
each variable $x^{[n]}$ has its bit-width $n$ encoded in binary, and
all scalar arguments of functions are encoded in binary. The size of
the formula $\varphi$ is denoted $|\varphi|$. The recursive definition
of $|\varphi|$ is given in Table~\ref{tbl:formulaSize}. For quantified
formulas with binary-encoded scalars, we define the corresponding
satisfiability problem:

\begin{definition}[\cite{KFB16}]
  The \emph{$\bvv$ satisfiability problem} is to decide satisfiability
  of a given closed quantified bit-vector formula with all scalars
  encoded in binary.
\end{definition}

Similarly to Kovásznai et al.~\cite{KFB16}, we use an \emph{indexing}
operation, which is a special case of the extraction operation that
produces only a single bit. In particular, for a term $t^{[n]}$ and a
number $0 \leq i < n$, the indexing operation $t^{[n]}[i]$ is defined
as $\extract(t^{[n]},i,i)$. We assume that bits of bit-vectors are
indexed from the least significant. For example, given a bit-vector
variable $x^{[6]}=x_5x_4x_3x_2x_1x_0$, the value of $x^{[6]}[1]$
refers to $x_1$. In the following, we use a more general version of
the indexing operation, in which the index can be an arbitrary
bit-vector term, not only a fixed scalar. This operation can be
defined using the indexing operation and the bit-shift operation with
only a linear increase in the size of the term:
\[
  t^{[n]}[s^{[n]}]~\stackrel{\mathrm{df}}{\equiv}~(t^{[n]} \gg s^{[n]})[0].
\]

\section{Alternation Complexity}

We assume a basic familiarity with an \emph{alternating Turing
  machine}~(\atm) introduced by Chandra, Kozen, and
Stockmeyer~\cite{CKS81}, and basic concepts from the complexity
theory, which can be found for example in Kozen~\cite{Koz06}. We
recall that each state of an \atm is either \emph{existential} or
\emph{universal}. Existential states behave like states of a
non-deterministic Turing machine: a run passing through an existential
state continues with one of the possible successors. In contrast to this,
a run entering a universal state forks and continues into all possible
successors. Hence, runs of an \atm are trees. Such a run is accepting
if each branch of the run ends in an accepting state.

This section recalls some complexity classes related to alternating
Turing machines. Computations in such complexity classes are bounded
not only by time and memory, but also by the number of alternations
between existential and universal states during the computation.
Although bounding both time and memory is useful in some applications,
in this paper we need only complexity classes related to {\atm}s that
are bounded in time and the number of alternations. Therefore, the
following definition introduces a family of complexity classes
parameterized by the number of steps and alternations used by
corresponding {\atm}s.

\begin{definition}
  Let $t,g \colon \N \rightarrow \N$ be functions such that
  $g(n) \geq 1$.  We define the complexity class $\atime{t}{g}$ as the
  class of all problems $A$ for which there is an alternating Turing
  machine that decides $A$ and, for each input of length $n$, it needs
  at most $t(n)$ steps and $g(n) - 1$ alternations along every
  branch of every run. If $T$ and $G$ are classes of functions, let
  $\atime{T}{G} = \bigcup_{t \in T, g \in G}\atime{t}{g}$.
\end{definition}

Chandra et al.~have observed several relationships between classical
complexity classes related to time and memory and the complexity
classes defined by {\atm}s~\cite{CKS81}. We recall relationships
between alternating complexity classes and the classes $\nexptime$ and
$\expspace$, which are important for this paper. It can easily be seen
that the class $\nexptime$ corresponds to all problems solvable by an
alternating Turing machine that starts in an existential state and can
use exponential time and no alternations: this yields an inclusion
$\nexptime \subseteq \atime{2^{\Oh(n)}}{1}$.  On the other hand,
results of Chandra et al.~imply that $\expspace$ is precisely the
complexity class $\atime{2^{n^{\Oh(1)}}}{2^{n^{\Oh(1)}}}$ of problems
solvable in exponential time and with exponential number of
alternations. An interesting class that lies in between those two
complexity classes can be obtained by bounding the number of steps
exponentially and the number of alternations polynomially. This class
is called $\aexpp$.

\begin{definition}
  $\aexpp \stackrel{\mathrm{df}}{=} \atime{2^{n^{\Oh(1)}}}{n^{\Oh(1)}}.$
\end{definition}

The following inclusions immediately follow from the mentioned results.
$$\nexptime \subseteq \aexpp \subseteq \expspace$$ However, it is
unknown whether any of the inclusions is strict.

\section{Complexity of $\bvv$ Satisfiability}

In this section, we show that the $\bvv$ satisfiability problem is
$\aexpp$-complete.  First, we prove that the problem is in the class
$\aexpp$.

\begin{theorem}
  The $\bvv$ satisfiability problem is in $\aexpp$.
\end{theorem}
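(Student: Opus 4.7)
The plan is to simulate evaluation of the input formula $\varphi$ directly by an alternating Turing machine whose state type tracks the polarity of the currently processed subformula. First, I would put $\varphi$ into negation normal form by the usual dualities; this is a polynomial-size, polynomial-time transformation, after which negation occurs only in front of atomic predicates. The ATM then descends the formula tree, using existential states for $\exists$ and $\vee$ and universal states for $\forall$ and $\wedge$. At every quantifier $\exists x^{[n]}$ (resp.\ $\forall x^{[n]}$) the machine writes down (resp.\ branches on) a candidate $n$-bit value of $x$ bit by bit, in a sequence of states of the same type. Since $n$ is binary-encoded inside $\varphi$, we have $n \leq 2^{|\varphi|}$, so writing down one value of one variable costs at most $2^{|\varphi|}$ steps, which fits into the exponential time budget.

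Once a branch of the simulation reaches a leaf of the formula tree, the assignment accumulated along that branch fixes a concrete bit-vector value for every variable in scope, and the ATM evaluates the relevant atomic predicate deterministically. Every operation in the signature of $\bv$, including multiplication, unsigned division, shifts, and extraction, admits an algorithm running in time polynomial in the bit-widths of its operands; as those bit-widths are bounded by $2^{|\varphi|}$ and $\varphi$ contains at most $|\varphi|$ operations, the total deterministic work on any branch is still $2^{|\varphi|^{\Oh(1)}}$.

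For the alternation count I would observe that an alternation along a run occurs exactly when the formula tree passes from a node labelled by $\{\exists,\vee\}$ to one labelled by $\{\forall,\wedge\}$, or vice versa. Thus the number of alternations along any root-to-leaf path of the simulation is bounded by the depth of the NNF of $\varphi$, which is at most $|\varphi|$, hence polynomial. Combining the two bounds, the simulating ATM works in time $2^{|\varphi|^{\Oh(1)}}$ with at most $|\varphi|^{\Oh(1)}$ alternations along every branch of every run, which puts $\bvv$ satisfiability in $\atime{2^{|\varphi|^{\Oh(1)}}}{|\varphi|^{\Oh(1)}} = \aexpp$.

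The main obstacle I anticipate is the careful accounting of the tape manipulations forced by the binary encoding: every quantified variable induces an exponentially long value on the tape, and one must check that guessing, storing, copying, and operating on these values all fit inside the exponential time budget. This is precisely the step where the argument diverges from the classical $\pspace$-completeness proof for unary-encoded quantified bit-vectors; once one records that each primitive operation is polynomial in its operand bit-widths and that only $|\varphi|$ many variables and operations appear, the budget closes comfortably and membership in $\aexpp$ follows.
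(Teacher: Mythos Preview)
Your argument is correct and close in spirit to the paper's, but you choose a different normal form and this changes where the alternation bookkeeping happens. The paper first converts $\varphi$ to \emph{prenex} normal form, so that all alternations of the \atm arise solely from the quantifier prefix; the quantifier-free matrix is then evaluated entirely deterministically, with no branching at $\wedge$ or $\vee$. Since $\varphi$ contains at most $|\varphi|$ quantifiers, the polynomial alternation bound is immediate. You instead go to NNF and let the \atm branch at Boolean connectives as well, treating $\vee$ existentially and $\wedge$ universally; your alternation bound then comes from the depth of the formula tree rather than the length of a quantifier prefix. Both routes are standard and yield the same $\aexpp$ upper bound with equal ease; the paper's prenex route is marginally cleaner because it avoids having to argue about alternations induced by propositional structure, while yours avoids the (admittedly routine) prenexing step and follows the syntax of $\varphi$ more directly.
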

\begin{proof}
  We describe the alternating Turing machine solving the problem. For
  a given $\bvv$ formula $\varphi$, the machine first converts the
  formula to the prenex normal form, which can be done in polynomial
  time without any alternations~\cite{Har09}. The machine then assigns
  values to all existentially quantified variables using existential
  states and to all universally quantified variables using universal
  states. Although this requires exponential time, as there are
  exponentially many bits whose value has to be assigned, only a
  polynomial number of alternations is required, because the formula
  $\varphi$ can contain only polynomially many quantifiers.

  Finally, the machine uses the assignment to evaluate the
  quantifier-free part of the formula. If the result of the evaluation
  is true, the machine accepts; it rejects otherwise. The evaluation
  takes exponential time and no quantifier alternations: the machine
  replaces all variables by exponentially many previously assigned
  bits and computes results of all operations from the bottom of the
  syntactic tree of the formula up. The computation of each of the
  operations takes time polynomial in the number of bits, which is
  exponential.
\end{proof}

In the rest of this section, we show that the $\bvv$ satisfiability
problem is also $\aexpp$-hard. In particular, we present a reduction
of a known $\aexpp$-hard \emph{second-order Boolean formulas
  satisfiability problem}~\cite{Loh12,Luc16} to the $\bvv$
satisfiability.

Intuitively, the \emph{second-order Boolean logic} ($\SO$) can be obtained
from a quantified Boolean logic by adding function symbols and
quantification over such symbols. Alternatively, the $\SO$ logic
corresponds to the second-order predicate logic restricted to the
domain $\{ 0, 1 \}$. Lohrey and Lück have shown that by bounding the
number of quantifier alternations in second-order Boolean formulas,
problems complete for all levels of the exponential hierarchy can be
obtained. Moreover, if the number of quantifier alternations is
unbounded, the problem of deciding satisfiability of quantified
second-order Boolean formulas is $\aexpp$-complete~\cite{Loh12,Luc16}.

We now introduce the $\SO$ logic more formally. The definitions of the
syntax and semantics of $\SO$ used in this paper are due to Hannula et
al.~\cite{HKLV16}.

\begin{definition}[$\SO$ syntax~\cite{HKLV16}]
  Let $\mathcal{F}$ be a countable set of function symbols, where each
  symbol $f \in \mathcal{F}$ is given an arity
  $\ar{f} \in \mathbb{N}_0$. The set $\SO(\F)$ of \emph{quantified
    Boolean second-order formulas} is defined inductively as
  \[
    \varphi ::= \varphi \wedge \varphi \mid \neg \varphi \mid \exists
    f \varphi \mid \forall f \varphi \mid f(\underbrace{\varphi,
      \ldots, \varphi}_{\ar{f} \text{ times }}),
  \]
  where $f \in \mathcal{F}$.
\end{definition}

\begin{definition}[$\SO$ semantics~\cite{HKLV16}]
  An \emph{$\F$-interpretation} is a function $\I$ that assigns to
  each symbol $f \in \F$ a Boolean function of the corresponding
  arity, i.e.~$\I(f) \colon \{0,1\}^{\ar{f}} \rightarrow \{0,1\}$ for
  each $f \in \F$. The valuation of a formula $\varphi \in \SO(\F)$ in
  $\I$, written $\eval{\varphi}_\I$, is defined recursively as
  \begin{alignat*}{3}
    &\eval{\varphi \wedge \psi}_\I &&= \eval{\varphi}_\I * \eval{\psi}_\I, \\
    &\eval{\neg \varphi}_\I &&= 1 - \eval{\varphi}_\I, \\
    &\eval{f(\varphi_1, \ldots, \varphi_n)}_\I &&= \I(f)(\eval{\varphi_1}_\I, \ldots, \eval{\varphi_n}_\I), \\
    &\eval{\exists f \varphi}_\I &&= \max \left \{ \eval{\varphi}_{\I[f \mapsto F]} \mid F \colon \{ 0,1 \}^{\ar{f}} \rightarrow \{0,1\} \right \}, \\
    &\eval{\forall f \varphi}_\I &&= \min \left \{ \eval{\varphi}_{\I[f \mapsto F]} \mid F \colon \{ 0,1 \}^{\ar{f}} \rightarrow \{0,1\} \right \},
  \end{alignat*}
  where $\I [f \mapsto F]$ is the function defined as
  $\I[f \mapsto F](f) = F$ and $\I[f \mapsto F](g) = \I(g)$ for all
  $g \not = f$.

  An $\SO$ formula $\varphi$ is \emph{satisfiable} if
  $\eval{\varphi}_\I=1$ for some $\I$. % Further, $\varphi$ is \emph{true}
  %if it does not contain free variables and is satisfiable.
\end{definition}

We call function symbols of arity $0$ \emph{propositions} and all
other function symbols \emph{proper functions}. An $\SO$ formula
$\varphi$ is in the \emph{prenex normal form} if it has the form
$\overline{Q} \psi$, where $\overline{Q}$ is a sequence of quantifiers
called a \emph{quantifier prefix}, $\psi$ is a quantifier-free formula
called a \emph{matrix}, and all proper functions are quantified before
propositions. In the following, we fix an arbitrary countable set of
function symbols $\mathcal{F}$ and instead of $\SO(\F)$, we write only
$\SO$.

\begin{definition}
  The \emph{$\SO$ satisfiability problem} is to decide whether a given
  closed $\SO$ formula in the prenex normal form is satisfiable.
\end{definition}

\begin{theorem}[\cite{Loh12,Luc16}]
  The $\SO$ satisfiability problem is $\aexpp$-complete.
\end{theorem}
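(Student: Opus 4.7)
The plan is to prove both directions of the $\aexpp$-completeness independently, since the result is cited from \cite{Loh12,Luc16}. For membership in $\aexpp$, I would build an \atm that reads the prenex prefix block by block. Any function symbol $f$ has $\ar{f}$ bounded by the formula size $n$, so its graph is a truth table of at most $2^n$ bits; the machine realises $\exists f$ (respectively $\forall f$) by branching existentially (respectively universally) over these bits, using exponential time. Consecutive same-polarity quantifiers merge into a single alternation, so the number of alternations is at most the number of blocks in the prefix, which is polynomial in $n$. Once the interpretation $\I$ is fixed, the matrix is evaluated deterministically in a bottom-up pass: by the semantics of $\eval{f(\varphi_1,\ldots)}_\I$, each internal node of the syntax tree requires only a table lookup indexed by recursively computed children, which fits in $2^{n^{\Oh(1)}}$ total time.

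For $\aexpp$-hardness, the plan is a generic reduction from the acceptance problem of an \atm $M$ running in time $2^{p(n)}$ with at most $q(n)$ alternations on an input $x$, where $p,q$ are polynomials. A configuration of $M$ has size $2^{p(n)}$, so it is encodable by a function of arity $p(n)$. Between consecutive alternation points $M$ behaves uniformly, and the entire intervening sub-computation forms a $2^{p(n)} \times 2^{p(n)}$ tableau, which is naturally captured by one function $h_k$ of arity $2p(n)$ with $h_k(t,i)$ giving the cell content (head position and state folded in). The reduction outputs the prenex formula $Q_1 h_1 \cdots Q_{q(n)} h_{q(n)}\, \forall \overline{t}\, \forall \overline{i}\; \psi$, where each $Q_k$ matches the polarity of $M$'s $k$-th phase, the $\overline{t}$ and $\overline{i}$ are $\Oh(p(n))$ propositional index variables, and the quantifier-free matrix $\psi$ conjoins: (i) the top row of $h_1$ encodes the initial configuration determined by $x$; (ii) the bottom row of each $h_k$ equals the top row of $h_{k+1}$; (iii) every pair of adjacent rows of every $h_k$ is consistent with $M$'s local transition relation at $(t,i)$; and (iv) the bottom row of $h_{q(n)}$ carries an accepting state.

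The main obstacle is the interplay between polynomial formula size and the correct alternation count. The polynomial size is what forces the encoding: naively, the matrix would state exponentially many per-cell conditions, but replacing that exponential conjunction by the single innermost block of $\Oh(p(n))$ universal propositional quantifiers over the indices collapses it to a constant-size propositional kernel that only inspects $h_k(t,i)$, its neighbours, and $h_{k+1}(0,i)$. Keeping the alternation count at $q(n) + \Oh(1)$ requires that this innermost $\forall$-propositional block either merge with the last proper-function block or introduce at most one extra alternation, which is exactly why the definition of prenex normal form insists that proper functions be quantified before propositions. The remaining bookkeeping -- expressing binary successor on $p(n)$-bit indices by a constant-size propositional formula, encoding $M$'s transition table, and aligning the polarities $Q_k$ with $M$'s phases -- is routine.
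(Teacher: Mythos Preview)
The paper does not prove this theorem; it is quoted from \cite{Loh12,Luc16} and used as a black box for the reduction in Theorem~\ref{thm}. There is thus no in-paper argument to compare your sketch against.

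Your membership argument is sound, but the hardness sketch has a genuine gap. You quantify each tableau $h_k$ with the polarity $Q_k$ of the $k$-th phase and then let the matrix \emph{conjoin} conditions (i)--(iv). For a universal phase this fails: the resulting formula asserts $\forall h_k\,(\text{adjacent rows of }h_k\text{ respect }M\text{'s transitions}\wedge\cdots)$, which is trivially false because almost every function $h_k$ is not a valid tableau---the universal player simply picks a garbage $h_k$ and violates (iii) regardless of whether $M$ accepts. Patching with an implication ``$h_k$ valid $\to\cdots$'' for universal blocks is the obvious move, but once you push the innermost $\forall\overline t\,\overline i$ through those antecedents the universals flip to existentials, adding roughly $q(n)$ further propositional alternations and destroying the bound you are trying to establish. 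The clean repair separates choices from witnesses: quantify binary choice functions $c_k\colon\{0,1\}^{p(n)}\to\{0,1\}$ with polarity $Q_k$ (so the run becomes deterministic once all $c_k$ are fixed), then append a single existential block $\exists h_1\cdots\exists h_{q(n)}$ of witness tableaux, then the universal index block $\forall\overline t\,\overline i$, and finally a purely conjunctive matrix stating that each $h_k$ is the tableau determined by $c_k$ from the end of $h_{k-1}$, together with (i), (ii), and (iv). This keeps all proper functions before propositions, uses $q(n)+\Oh(1)$ alternations, and remains polynomial in size.
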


We now show a polynomial time reduction of $\SO$ satisfiability to $\bvv$
satisfiability and thus finish the main claim of this paper, which
states that the $\bvv$ satisfiability problem is $\aexpp$-complete.

\begin{theorem}
  \label{thm}
  The $\bvv$ satisfiability problem is $\aexpp$-hard.
\end{theorem}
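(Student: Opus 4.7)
The plan is to give a polynomial-time reduction from the $\SO$ satisfiability problem to the $\bvv$ satisfiability problem. The key observation is that an $\F$-interpretation of a function symbol $f$ of arity $k$ is a Boolean function $\{0,1\}^k \to \{0,1\}$, which is uniquely determined by its truth table of $2^k$ bits. Since bit-widths in $\bvv$ are encoded in binary, a bit-vector variable of width $2^k$ can be declared with a superscript of only $k+1$ bits, which is polynomial in the size of the input $\SO$ formula. So the reduction represents each function symbol $f$ of arity $k = \ar{f}$ by a bit-vector variable $x_f$ of width $2^k$ whose bits form the truth table of $\I(f)$, and replaces $\SO$ quantification over $f$ by $\bv$ quantification over $x_f$.

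For the matrix, I would recursively define a translation $T$ that maps every $\SO$ subformula $\varphi$ to a bit-vector term $T(\varphi)$ of width $1$ such that $T(\varphi)$ evaluates to $1^{[1]}$ exactly when $\eval{\varphi}_\I = 1$ under the corresponding bit-vector valuation. Conjunction and negation translate to the bitwise operations $\&$ and $\sim$ on width-$1$ terms. An application $f(\varphi_1, \ldots, \varphi_k)$ translates to $x_f[s]$, where $s$ is a width-$2^k$ bit-vector whose low-order $k$ bits are the concatenation $T(\varphi_1) \cdot T(\varphi_2) \cdots T(\varphi_k)$ (which records the argument values as a binary numeral) and whose remaining high-order bits are filled with the constant $0^{[2^k - k]}$ so that the width of $s$ matches the width of $x_f$ as required by the generalized indexing operation introduced earlier in the paper. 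Finally, each quantifier $\exists f$ or $\forall f$ becomes $\exists x_f^{[2^k]}$ or $\forall x_f^{[2^k]}$, respectively; the bijection between bit-vectors of width $2^k$ and Boolean functions of arity $k$ ensures that these quantifiers range over the same set of semantic objects.

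For the polynomial-size check, the bit-width $2^k$ occurs only inside binary-encoded superscripts and has length $k+1 \leq |\varphi|+1$; each $\SO$ construct translates to a constant-size piece of bit-vector syntax on top of the translated subterms; and the padding constant $0^{[2^k-k]}$ also has binary encoding of size $O(k)$. Correctness follows by a straightforward induction on the structure of the $\SO$ formula, pairing each $\F$-interpretation $\I$ with the bit-vector valuation that assigns to $x_f$ the bit-vector whose bit at position $b_1 b_2 \ldots b_k$ equals $\I(f)(b_1, \ldots, b_k)$.

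The main obstacle I expect is purely bookkeeping: the indexing operator $t^{[n]}[s^{[n]}]$ demands that the index have the same bit-width $n$ as the indexed term, so one cannot simply plug a $k$-bit concatenation into the $2^k$-bit truth-table lookup and must instead zero-extend it, and one must verify that every intermediate term has the width that its surrounding context requires. Everything else — handling propositions as the degenerate case $k=0$ (i.e.\ a width-$1$ variable), ensuring that the fixed ordering of proper functions before propositions in the quantifier prefix is preserved, and checking that the final bit-vector formula is closed — is routine once the width arithmetic is set up correctly.
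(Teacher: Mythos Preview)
Your proposal is correct and follows essentially the same approach as the paper: both reduce from $\SO$ satisfiability by encoding each function symbol $f$ of arity $k$ as a bit-vector variable $x_f^{[2^k]}$ holding its truth table, translate the matrix recursively with $\&$, $\sim$, and a zero-padded concatenation fed into the generalized indexing operator, and replace each second-order quantifier by the corresponding bit-vector quantifier. The paper's proof is slightly terser but the construction, the width-matching padding, and the polynomial-size argument are the same.
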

\begin{proof}
  We present a polynomial time reduction of $\SO$ satisfiability to
  $\bvv$ satisfiability. Let $\varphi$ be an $\SO$ formula with a
  quantifier prefix $\overline{Q}$ and a matrix $\psi$,
  i.e.~$\varphi = \overline{Q} \psi$ where $\psi$ is a quantifier-free
  formula. We construct a bit-vector formula $\varphi^{BV}$, such that
  $\varphi$ is satisfiable iff the formula $\varphi^{BV}$ is
  satisfiable.

  In the formula $\varphi^{BV}$, each function symbol $f$ of the
  formula $\varphi$ is represented by a bit-vector variable $x_f$ of
  bit-width $2^{\ar{f}}$. Intuitively, the bits of the variable $x_f$
  will encode values $f(b_{n-1}, \ldots, b_{0})$ for all possible
  inputs $b_0, \ldots, b_{n-1} \in \{ 0, 1 \}$. In particular, the
  value $f(b_{n-1}, \ldots, b_{0})$ is represented as the bit on the
  index $\sum_{i=0}^{n-1} (2^ib_i)$ in the bit-vector $x_f$.
  Equivalently, this index can be expressed as the numerical value of
  the bit-vector $b_{n-1} b_{n-2} \ldots b_0$. For example, for a
  ternary function symbol $f$, bits of the bit-vector value
  $x_f = x_7x_6x_5x_4x_3x_2x_1x_0$ will represent values $f(1,1,1)$,
  $f(1,1,0)$, $f(1,0,1)$, $f(1,0,0)$, $f(0,1,1)$, $f(0,1,0)$,
  $f(0,0,1)$, and $f(0,0,0)$, respectively.

  The reduction proceeds in two steps. First, we inductively construct
  a bit-vector term $\psi^{BV}$ of bit-width $1$, which corresponds to
  the formula $\psi$:

  \begin{itemize}
  \item If $\psi \equiv \rho_1 \wedge \rho_2$, we set
    $\psi^{BV} \equiv \rho_1^{BV} \mathbin{\&} \rho_2^{BV}$.
  \item If $\psi \equiv \neg \rho$, we set
    $\psi^{BV} \equiv {\sim} \rho^{BV}$.
  \item If $\psi \equiv f()$ (i.e.~$f$ is a proposition), we set
    $\psi^{BV} \equiv x_f^{[1]}$.
  \item If $\psi \equiv f(\rho_{n-1}, \ldots, \rho_{0})$ where $n = \ar{f}$,
    we set
    \[
      \psi^{BV} \equiv x_f^{[2^n]} \left [0^{[2^n - n]} \cdot \rho_{n-1}^{BV}
      \cdot \rho_{n-2}^{BV} \cdot \ldots \cdot \rho_{0}^{BV} \right].
    \]
    Note that because both arguments of the indexing operation have to
    be of the same sort, $2^n - n$ additional bits have to be added to
    the index term to get a term of the same bit-width as the term
    $x_f^{[2^n]}$.
  \end{itemize}

  In the second step, we replace each quantifier $Q_i f$ in the
  quantifier prefix $\overline{Q}$ by a bit-vector quantifier
  $Q_i x_f^{[2^n]}$, where $n = \ar{f}$, and thus obtain a sequence of
  bit-vector quantifiers $\overline{Q}^{BV}$. The final formula
  $\varphi^{BV}$ is then $\overline{Q}^{BV} (\psi^{BV} = 1^{[1]})$.

  Due to the binary representation of the bit-widths, the formula
  $\varphi^{BV}$ is polynomial in the size of the formula~$\varphi$.
\end{proof}

\begin{example}
  Consider an $\SO$ formula
  \[
    \exists f \forall p \forall q \,.\, \neg f(p,p,q) \wedge f(p,q \wedge
    \neg q, q),
  \]
  where $f$ is a ternary function symbol and $p, q$ are
  propositions. Then the result of the described reduction is the
  formula
  \[
    \exists x_f^{[8]} \forall x_p^{[1]} \forall x_q^{[1]} \; ({\sim}
    x_f^{[8]}[0^{[5]} \cdot x_p \cdot x_p \cdot x_q] \mathbin{\&}
    x_f^{[8]}[0^{[5]} \cdot x_p \cdot (x_q \mathbin{\&} {\sim}x_q)
    \cdot x_q]~=~1^{[1]}).
  \]
\end{example}

\begin{corollary}
  The $\bvv$ satisfiability problem is $\aexpp$-complete.
\end{corollary}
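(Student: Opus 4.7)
The approach is to combine the two results already established in this section: the theorem placing \bvv satisfiability inside \aexpp, and Theorem~\ref{thm} showing that it is \aexpp-hard. By definition, a decision problem is \aexpp-complete precisely when it lies in \aexpp and is \aexpp-hard under polynomial-time reductions, so the corollary reduces to a one-line assembly of these two facts.

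More concretely, I would first cite the theorem that exhibits an alternating Turing machine deciding \bvv satisfiability in exponential time with only polynomially many alternations; this directly witnesses membership in \aexpp. Second, I would invoke Theorem~\ref{thm}, whose polynomial-time reduction takes an arbitrary $\SO$ instance to an equisatisfiable $\bvv$ instance. Combined with the cited result of Lohrey and Lück that $\SO$ satisfiability is \aexpp-hard, the reduction transfers this hardness to \bvv satisfiability, since \aexpp is closed under polynomial-time reductions (indeed, composing a polynomial-time reduction with any \aexpp algorithm yields another \aexpp algorithm, as both exponential running time and polynomially many alternations are preserved).

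I do not anticipate any real obstacle at this stage, because all of the substantive combinatorial and logical content lives in the two preceding theorems. The only minor point worth flagging explicitly is the closure of \aexpp under polynomial-time many-one reductions; if one were being fully formal, this would be checked by the standard argument that the composition of a polynomial-time computable map with an \aexpp machine is itself an \aexpp machine operating on the original input. Once that is noted, the corollary is immediate.
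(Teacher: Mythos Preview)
Your proposal is correct and matches the paper's approach: the corollary is stated without proof in the paper, as it follows immediately from combining the membership theorem and Theorem~\ref{thm}. Your additional remark about closure of \aexpp under polynomial-time many-one reductions is a reasonable point of rigor, but the paper treats the corollary as self-evident from the two preceding results.
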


\begin{table}[!tbp]
  \begin{center}
    \setlength{\tabcolsep}{0.39em}
    \begin{tabular}{l l l l r}
      \toprule
      & \multicolumn{4}{c}{Quantifiers} \\
      \cmidrule(l){2-5}
      & \multicolumn{2}{c}{No} & \multicolumn{2}{c}{Yes} \\
      \cmidrule(l){2-3}
      \cmidrule(l){4-5}
      & \multicolumn{2}{c}{Uninterpreted functions} & \multicolumn{2}{c}{Uninterpreted functions} \\
      Encoding & \multicolumn{1}{c}{No} & \multicolumn{1}{c}{Yes} & \multicolumn{1}{c}{No} & \multicolumn{1}{c}{Yes} \\
      \midrule
      Unary & \np & \np & \pspace & \nexptime \\
      Binary\hspace{.1em} & \nexptime & \nexptime & \aexpp & \nnexptime \\
      \bottomrule
    \end{tabular}
  \end{center}
  \caption{Completeness results for various bit-vector logics and
    encodings. This is the table presented by Fröhlich et
    al.~\cite{FKB13} extended by the result proved in this paper.}
  \label{tbl:complexity}
\end{table}

\section{Conclusions}

We have identified the precise complexity class of deciding
satisfiability of a quantified bit-vector formula with binary-encoded
bit-widths. This paper shows that the problem is complete for the
complexity class $\aexpp$, which is the class of all problems solvable
by an alternating Turing machine that can use exponential time and a
polynomial number of alternations. This result settles the open
question raised by Kovásznai et al.~\cite{KFB16}. Known completeness
results for various bit-vector logics including the result proven in
this paper are summarized in Table~\ref{tbl:complexity}.

\section*{Acknowledgements}
Authors of this work are supported by the Czech Science Foundation,
project No.~GBP202/12/G061.

\section*{References}

\bibliography{complexity}

\end{document}